\newtheorem{theorem}{Theorem}
\begin{document}

%

\title{Denoising Higher-order Moments for Blind Digital Modulation Identification in Multiple-antenna Systems}

%
%
%

\author{Sofiane~Kharbech,~\IEEEmembership{Member,~IEEE,}
        Eric~Pierre~Simon,
        Akram~Belazi,
        and~Wei~Xiang,~\IEEEmembership{Senior~Member,~IEEE}

\thanks{The source code for this work is available on \href{https://github.com/sofiane-kharbech/Denoising-HOM-for-DMI}{\emph{https://github.com/sofiane-kharbech/Denoising-HOM-for-DMI}}}

\thanks{S. Kharbech is with the Laboratory IEMN/IRCICA (UMR-CNRS-8520), University of Lille, Lille 59100, France, and with the Laboratory Sys'Com-ENIT (LR-99-ES21), Tunis El Manar University, Tunis 1002, Tunisia (e-mail: sofiane.kharbech@ieee.org).}

\thanks{E. P. Simon is with the Laboratory IEMN/TELICE (UMR-CNRS-8520),
University of Lille, Lille 59100, France (e-mail: eric.simon@univ-lille.fr).}

\thanks{A. Belazi is with the Laboratory RISC (LR-16-ES07), Tunis El Manar University, Tunis 1002, Tunisia (e-mail: akram.belazi@enit.utm.tn).}

\thanks{W. Xiang is with the College of Science and Engineering, James Cook University, Cairns, QLD 4870, Australia (e-mail: wei.xiang@jcu.edu.au).}

}

\maketitle
\thispagestyle{fancy}
\pagestyle{fancy}

\begin{abstract}
The paper proposes a new technique that substantially improves blind digital modulation identification (DMI) algorithms that are based on higher-order statistics (HOS). The proposed technique takes advantage of noise power estimation to make an offset on higher-order moments (HOM), thus getting an estimate of noise-free HOM. When tested for multiple-antenna systems, the proposed method outperforms other DMI algorithms, in terms of identification accuracy, that are based only on cumulants or do not consider HOM denoising, even for a receiver with impairments. The improvement is achieved with the same order of complexity of the common HOS-based DMI algorithms in the same context.
\end{abstract}

\begin{IEEEkeywords}
Cognitive radio, modulation identification, higher-order statistics, multiple-antenna systems, denoising features.
\end{IEEEkeywords}

%
\IEEEpeerreviewmaketitle

\DeclareRobustCommand{\lc}{\raisebox{2pt}{\tikz{\draw[black,solid,line width=1.1pt](0,0) -- (7mm,0);}}}
\DeclareRobustCommand{\lt}{\raisebox{2pt}{\tikz{\draw[black,dashed,line width=1.1pt](0,0) -- (7mm,0);}}}
\DeclareRobustCommand{\lp}{\raisebox{2pt}{\tikz{\draw[black,dotted,line width=1.1pt](0,0) -- (7mm,0);}}}
\DeclareRobustCommand{\lpt}{\raisebox{2pt}{\tikz{\draw[black,dash pattern={on 7pt off 2pt on 1pt off 3pt},line width=1.1pt](0,0) -- (7mm,0);}}}

\section{Introduction}
%
%
%
%
\IEEEPARstart{W}{ith} the continuous and fast development of intelligent communication systems, signal detection is always a critical issue to consider. In intelligent transmission such as cognitive radios, signal detection is no more limited to detecting energy, and it goes beyond, e.g., demodulating unknown signals. Modulation identification is the step that succeeds energy detection and precedes signal demodulation. When both source signals and channel parameters are unknown, we are in a blind context that naturally requires a blind process of modulation recognition. Despite their high identification accuracy, maximum-likelihood-based techniques for modulation identification often suffer from the substantially high complexity. Feature-based algorithms of modulation identification give an alternative that provides a good performance and complexity trade-off.

As low computational complexity features and widely employed in digital modulation identification (DMI), higher-order statistics (HOS), i.e., higher-order moments (HOM) and higher-order cumulants (HOC), have always exhibited a good identification performance \cite{swami2000,kharbech2014,liu2017,hassan2012,bahloul2017,kharbech2013,ali2017,tayakout2018}. Employed HOS in that context are estimated from noisy observations. Estimated HOC are insensitive to noise \cite{mendel1991}, unlike the estimated HOM. As such, most of HOS-based DMI algorithms rely on HOC as features \cite{swami2000,ali2017,tayakout2018,bahloul2017}. However, many other HOS-based DMI algorithms attempt to improve the identification performance by including a set of HOM \cite{kharbech2014,hassan2012,liu2017,kharbech2013}.

Since blind estimation of the noise power is widely addressed in the literature, this motivated us to consider denoising the estimated HOM. The main contribution of this paper is to further improve the performance of a DMI system through the use of noise-free HOM as part of HOS. Furthermore, to enhance the blindness aspect, we make use of a classifier that does not require prior training.
Also, within the framework of this paper, we consider multiple-input–multiple-output (MIMO) systems as an essential part of state-of-the-art wireless systems. Moreover, multi-antenna systems are amply involved in the subject of DMI \cite{kharbech2014,liu2017,hassan2012,bahloul2017,kharbech2013,tayakout2018,marey2014,marey2015,
muhlhaus2013,eldemerdash2016}. As far as we know, there is no yet attempt on offsetting noise in HOM in the blind DMI context. In more detail, the contributions of the paper are three-fold: (i) The derivation of the noise-free HOM formulas for the baseband digitally-modulated signals that in turn require the derivation of the HOM formula for the complex-valued Gaussian noise;
(ii) The denoising approach is integrated into the MIMO system. It allows noiseless HOM for each receive antenna, taking into account the effect of source separation processing;
and (iii) The denoising technique improves the identification accuracy under the influence of various receiver impairments while maintaining the same complexity order of the DMI system.

The rest of the paper is organized as follows. Section II describes the signal model as well as the identification process. In Section III, we give analytical formulas for the denoised moments. Section IV includes a discussion of the presented results. Finally, Section V concludes the paper.

\section{System Model}

In this section we formulate the mathematical model of the received signals and we present a description of each block of the identification process. For better readability, Table {\ref{notation}} defines the notation used in the paper.

\begin{table}[!htb]
\centering
\caption{Notation}
\label{notation}
\begin{tabular}{p{1.2cm} p{6.6cm}}
\hline
$\stackrel{\text{d}}{=}$             & Equality in distribution \\
$\mathbb{C}$      					 & Set of complex numbers \\
$\mathbb{N}^+$						 & Set of natural numbers, 0 is excluded \\
$\jmath$          					 & Imaginary unit \\
$a$, $\textbf{a}$, $\textbf{A}$		 & Scalar, vector, matrix \\
$\textbf{I}_d$    					 & Identity matrix of size $d\times d$ \\
$(.)^T$ 							 & Transpose operator \\
$(.)^*$ 							 & Complex conjugate \\
$(.)^H$								 & Hermitian transpose \\
$|.|$             					 & Modulus of a complex number \\
$\angle{.}$             			 & Argument of a complex number \\
$\hat x$          					 & Estimator of $x$ \\
$\mathcal{N}$						 & Real-valued normal (or Gaussian) distribution \\
$\mathcal{CN}$						 & Complex-valued normal distribution \\
$\mathcal{U}$						 & Uniform distribution \\
$\mathbb{E}\{.\}$					 & Expected value of a random variable \\
$!!$								 & Double factorial \\
$\text{diag}^{-1}(.)$				 & Main diagonal vector of a matrix \\
i.i.d.					 			 & independent and identically distributed \\
\hline
\end{tabular}
\end{table}

\subsection{Signal Model}

We consider a frequency-flat block-fading MIMO system with $N_t$ transmit and $N_r$ receive antennas ($N_t<N_r$). The $n$th received baseband signal at antenna $i$ is expressed as
\begin{equation}
y_i(n)=\sum_{j=1}^{N_t}{h_{ij}x_j(n)}+w_i(n),
\end{equation}
where $y_i(n)$ is the $i$th element of the received MIMO symbol $\textbf{y}(n)\in \mathbb{C}^{N_r\times 1}$, $x_i(n)$ is the $j$th element of the transmitted MIMO symbol $\textbf{x}(n)\in \mathbb{C}^{N_t\times 1}$ (source signals are i.i.d.), $h_{ij}$ is the element $(i,j)$ of the spatially-uncorrelated MIMO channel matrix $\textbf{H}\in \mathbb{C}^{N_r\times N_t}$, and $w_i\sim\mathcal{CN}\left(0,\sigma_w^2\right)$ is a circularly symmetrical complex Gaussian noise at the receive antenna $i$.

\subsection{Identification Process}

The overall process relies on a process that is widely used for the blind DMI issue in MIMO systems \cite{hassan2012,kharbech2013,bahloul2017,liu2017,kharbech2014} while taking advantages of noise power estimation (e.g., \cite{kharbech2014,bahloul2017}) for denoising moments. This is in addition to the use of a blind classifier instead of a classifier that requires prior training.
Fig.~\ref{fig1} depicts the implemented detection process on one receive antenna. The proposed scheme is composed of three main stages, namely: (i) the blind source separation (BSS) step to blindly recover the source signal in conjunction with a noise power estimator; (ii) the denoising-based feature extraction stage allows a better characterization of the modulation scheme (features are denoted by $\mu$ for HOM and by $\kappa$ for HOC); and (iii) the modulation scheme at each antenna is estimated via minimum distance (MD) classification. The estimated modulations $\hat{\theta}_j$ are gathered for the final decision. The most frequent modulation scheme is regarded as the final decision.

For BSS, we make use of the simplified constant modulus algorithm (SCMA) \cite{ikhlef2007}, which is a simplified version of the well-known constant modulus algorithm. SCMA aims at finding a matrix $\textbf{G}\in\mathbb{C}^{N_r\times N_t}$ termed the separator so that the recovered MIMO symbol $\hat{\textbf{x}}(n)$ is estimated as
\begin{equation}
\hat{\textbf{x}}(n)=\textbf{G}^T\textbf{y}(n)=\textbf{G}^T\textbf{H}\textbf{x}(n)+\tilde{\textbf{w}}(n),
\end{equation}
where $\tilde{\textbf{w}}(n)=\textbf{G}^T\textbf{w}(n)$ is the filtered noise. Assuming perfect BSS, i.e., $\textbf{G}^T\textbf{H}=\textbf{I}_{N_t}$, we have
\begin{equation}
\hat{\textbf{x}}(n)=\textbf{x}(n)+\tilde{\textbf{w}}(n).
\end{equation}
The common eigenvalue-based technique \cite{chen1991} is an obvious choice to estimate the noise power and the number of transmitters at once. This technique can be applied as it is in a blind context \cite{kharbech2014}. The features extraction process with denoising, as the main contribution of the paper, is discussed in detail in the next section. For ensuring blind DMI, we make use of a blind classifier, i.e., a classifier that does not need to be trained on test signals with known modulation schemes and particular values of the signal-to-noise ratio (SNR). The MD classifier is the simplest for that goal as it calculates the Euclidian distance of a feature vector with all the theoretical ones, and then selects the closest.

\begin{figure}[!htb]
\centering
\includegraphics[scale=.52]{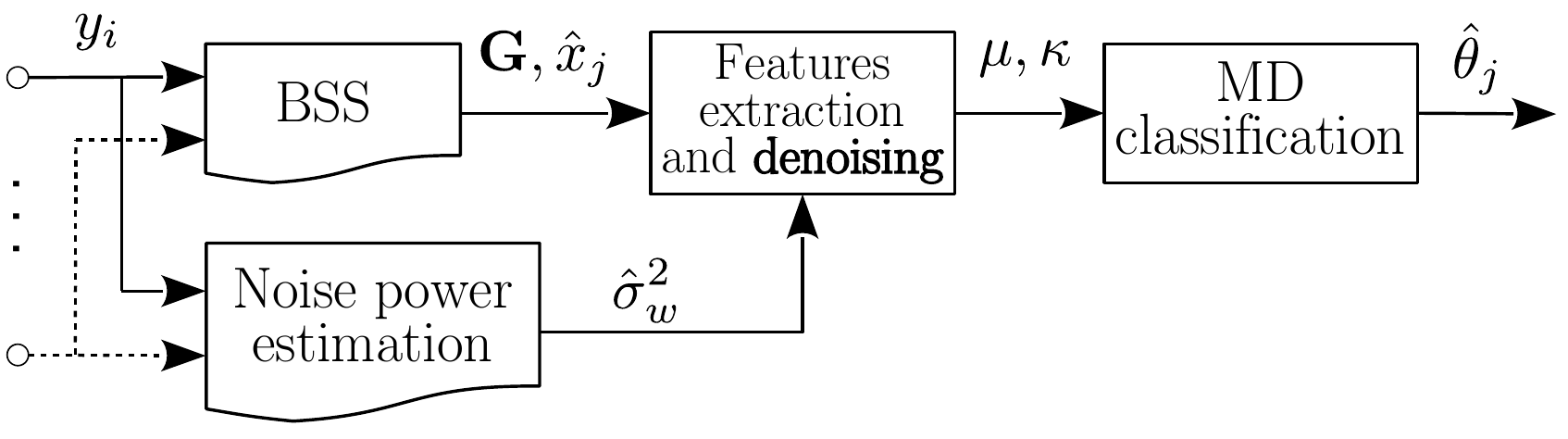}
\caption{Denoising-based blind modulation identification scheme for a receive antenna.}
\label{fig1}
\end{figure}

\section{Higher-order Statistics for DMI and Denoising Moments}

Table \ref{tableHOS} shows how a digital modulation scheme can be characterized by a set of HOS.
Since it is already proved that HOC are noise-insensitive \cite{mendel1991}, we focus on the derivation of the HOM in this section. For a given signal $x$ and integers $p$ and $q$, $0 \leq q \leq p$, the HOM of order $p$ is expressed as
\begin{equation}
\mu_{pq}(x)=\mathbb{E}\left\{x^{p-q}x^{*q}\right\}.
\end{equation}

\begin{table}[!htb]
\caption{Theoretical values of the deployed set of higher-order statistics for the simulated pool of modulation schemes \cite{swami2000,dobre2007}.
These values are obtained using noiseless signals of zero mean and unit variance.}
\label{tableHOS}
\centering
{\small
\begin{tabular}{p{3mm}|c|c|c|c|c|c}
\hline
& \begin{sideways} \mdseries B-PSK \end{sideways} & \begin{sideways} \mdseries Q-PSK \end{sideways} & \begin{sideways} \mdseries 8-PSK \end{sideways} & \begin{sideways} \mdseries 4-ASK \end{sideways} & \begin{sideways} \mdseries 8-ASK \end{sideways} & \begin{sideways} \mdseries 16-QAM \end{sideways} \\
\hline\hline
$\mu_{40}$ & \tablenum{1} & \tablenum{1} & \tablenum{0} & \tablenum{1.64} & \tablenum{1.77} & \tablenum{-0.67} \\
$\mu_{41}$ & \tablenum{1} & \tablenum{0} & \tablenum{0} & \tablenum{1.64} & \tablenum{1.77} & \tablenum{0} \\
$\mu_{42}$ & \tablenum{1} & \tablenum{1} & \tablenum{1} & \tablenum{1.64} & \tablenum{1.77} & \tablenum{1.32} \\
$\mu_{60}$ & \tablenum{1} & \tablenum{0} & \tablenum{0} & \tablenum{2.92} & \tablenum{3.62} & \tablenum{0} \\
$\mu_{61}$ & \tablenum{1} & \tablenum{-1} & \tablenum{0} & \tablenum{2.92} & \tablenum{3.62} & \tablenum{-1.32} \\
$\mu_{62}$ & \tablenum{1} & \tablenum{0} & \tablenum{0} & \tablenum{2.92} & \tablenum{3.62} & \tablenum{0} \\
$\mu_{63}$ & \tablenum{1} & \tablenum{1} & \tablenum{1} & \tablenum{2.92} & \tablenum{3.62} & \tablenum{1.96} \\
$\mu_{84}$ & \tablenum{1} & \tablenum{1} & \tablenum{1} & \tablenum{5.25} & \tablenum{7.92} & \tablenum{3.12} \\
\hline
$\kappa_{40}$ & \tablenum{-2} & \tablenum{1} & \tablenum{0} & \tablenum{-1.36} & \tablenum{-1.24} & \tablenum{-0.68} \\
$\kappa_{41}$ & \tablenum{-2} & \tablenum{0} & \tablenum{0} & \tablenum{-1.36} & \tablenum{-1.24} & \tablenum{0} \\
$\kappa_{42}$ & \tablenum{-2} & \tablenum{-1} & \tablenum{-1} & \tablenum{-1.36} & \tablenum{-1.24} & \tablenum{-0.68} \\
$\kappa_{60}$ & \tablenum{16} & \tablenum{0} & \tablenum{0} & \tablenum{8.32} & \tablenum{7.19} & \tablenum{0} \\
$\kappa_{61}$ & \tablenum{16} & \tablenum{-4} & \tablenum{0} & \tablenum{8.32} & \tablenum{7.19} & \tablenum{2.08} \\
$\kappa_{62}$ & \tablenum{16} & \tablenum{0} & \tablenum{0} & \tablenum{8.32} & \tablenum{7.19} & \tablenum{0} \\
$\kappa_{63}$ & \tablenum{16} & \tablenum{4} & \tablenum{4} & \tablenum{8.32} & \tablenum{7.19} & \tablenum{2.08} \\
\hline
\end{tabular}
}
\end{table}

In fact, not all moments need to be denoised. One can prove that $\mu_{p0}$ is noise-insensitive (cf. Appendix). To derive the noise-free moments for $q>0$, we have to derive their formulas in terms of the noise power $\sigma_w^2$. Towards this end, we will consider $y=x+w$ as a given mix of a digital modulated signal $x$ in the baseband and a circularly symmetrical complex Gaussian noise $w\sim\mathcal{CN}\left(0,\sigma_w^2\right)$ independent of $x$. The considered moments (Table {\ref{tableHOS}}, $q>0$) are derived as follows.

\begin{align}
\begin{split}
\mu_{41}(y)&=\mu_{41}(x)+3\mu_{20}(x)\sigma_w^2
\end{split}\\
\begin{split}
\label{m42}
\mu_{42}(y)&=\mu_{42}(x)+4\mu_{21}(x)\sigma_w^2+\mathbb{E}\left\{w^2w^{*2}\right\}
\end{split}\\
\begin{split}
\label{m61}
\mu_{61}(y)&=\mu_{61}(x)+5\mu_{40}(x)\sigma_w^2
\end{split}\\
\begin{split}
\label{m62}
\mu_{62}(y)&=\mu_{62}(x)+8\mu_{41}(x)\sigma_w^2+6\mu_{20}(x)\mathbb{E}\left\{w^2w^{*2}\right\}
\end{split}\\
\begin{split}
\label{m63}
\mu_{63}(y)&=\mu_{63}(x)+9\mu_{42}(x)\sigma_w^2+9\mu_{21}(x)\mathbb{E}\left\{w^2w^{*2}\right\}\\
&\hspace{3.5mm}+\mathbb{E}\left\{w^3w^{*3}\right\}
\end{split}\\
\begin{split}
\label{m84}
\mu_{84}(y)&=\mu_{84}(x)+16\mu_{63}(x)\sigma_w^2+36\mu_{42}(x)\mathbb{E}\left\{w^2w^{*2}\right\}\\
&\hspace{3.5mm}+16\mu_{21}(x)\mathbb{E}\left\{w^3w^{*3}\right\}+\mathbb{E}\left\{w^4w^{*4}\right\}
\end{split}
\end{align}
In (\ref{m42}), (\ref{m62})--(\ref{m84}), to have formulas in relation to $\sigma_w^2$, we should derive $\mu_{pq}(w)$ (i.e., $\mathbb{E}\left\{w^qw^{*q}\right\}$) for $q>1$. For that purpose, we introduce Theorem \ref{th1}.


\begin{theorem}\label{th1}
Let $s\sim\mathcal{CN}\left(0,\sigma_s^2\right)$,
\begin{equation}
\begin{aligned}
\mu_{pq}(s)=
\left\{
    \begin{array}{rl}
        \left(\dfrac{p}{2}\right)!\sigma_s^p, &\text{if } q=\cfrac{p}{2},\\
        0, &\text{elsewhere}.
    \end{array}
\right.
\end{aligned}
\end{equation}
\end{theorem}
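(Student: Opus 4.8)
The plan is to exploit the two defining features of a circularly symmetric complex Gaussian $s$: its distribution is invariant under phase rotation, and its real and imaginary parts are jointly Gaussian. I would organize the proof around the split between the ``off-diagonal'' orders $q \neq p/2$ and the single ``diagonal'' order $q = p/2$.

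First I would dispose of the off-diagonal case using rotational invariance. Circular symmetry means $s \stackrel{\text{d}}{=} e^{\jmath\phi}s$ for every fixed $\phi \in \mathbb{R}$. Substituting $e^{\jmath\phi}s$ for $s$ in the definition $\mu_{pq}(s) = \mathbb{E}\{s^{p-q}s^{*q}\}$ and pulling the deterministic factor out of the expectation yields $\mu_{pq}(s) = e^{\jmath\phi(p-2q)}\mu_{pq}(s)$. Because this identity holds for all $\phi$, whenever $p - 2q \neq 0$ (equivalently $q \neq p/2$, which in particular covers every odd $p$) one may choose $\phi$ with $e^{\jmath\phi(p-2q)} \neq 1$, forcing $\mu_{pq}(s) = 0$. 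This settles the ``elsewhere'' branch in one stroke.

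For the diagonal order $q = p/2$ (necessarily $p$ even), note that $s^{p-q}s^{*q} = (ss^*)^{p/2} = |s|^{p}$, so the task reduces to computing $\mathbb{E}\{|s|^{p}\}$. Writing $s = s_R + \jmath s_I$, circular symmetry forces $s_R$ and $s_I$ to be i.i.d. $\mathcal{N}(0,\sigma_s^2/2)$, whence $R := |s|^2 = s_R^2 + s_I^2$ is exponentially distributed with mean $\sigma_s^2$. Then $\mathbb{E}\{|s|^p\} = \mathbb{E}\{R^{p/2}\}$ is a standard Gamma integral: with $m = p/2$ one gets $\int_0^\infty r^m \sigma_s^{-2}e^{-r/\sigma_s^2}\,\mathrm{d}r = \sigma_s^{2m}\Gamma(m+1) = (p/2)!\,\sigma_s^{p}$, which is exactly the claimed value.

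An equivalent route is the complex Wick (Isserlis) theorem, under which $\mathbb{E}\{s^{p-q}s^{*q}\}$ equals the sum over perfect matchings of the $p$ factors; since the pseudo-covariance $\mathbb{E}\{s^2\}$ vanishes, only matchings pairing each $s$ with an $s^*$ survive, and these exist solely when $p - q = q$, numbering $(p/2)!$ and each contributing $(\mathbb{E}\{|s|^2\})^{p/2} = \sigma_s^{p}$. In both approaches the only genuinely load-bearing fact is that the phase-sensitive contributions vanish --- the rotational-invariance identity above, or equivalently $\mathbb{E}\{s^2\}=0$. I expect this to be the main point to get right; everything downstream is a routine moment computation. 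I would therefore lead with the circular-symmetry argument, since it makes that vanishing transparent and handles all off-diagonal orders simultaneously.
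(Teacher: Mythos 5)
Your proof is correct, and for the main computation it takes a genuinely different route from the paper's. For the off-diagonal case $q \neq p/2$ you and the paper rely on the same underlying fact --- circular symmetry kills every phase-sensitive moment --- though the paper phrases it in polar coordinates ($s = re^{\jmath\theta}$ with $\theta\sim\mathcal{U}(-\pi,\pi)$ independent of $r$, so $\mathbb{E}\{e^{\jmath(p-2q)\theta}\}=0$), while you use the rotation-invariance identity $\mu_{pq}(s)=e^{\jmath\phi(p-2q)}\mu_{pq}(s)$; these are interchangeable. The real divergence is in the diagonal case $q=p/2$: the paper writes $|s|^p=(s_r^2+s_i^2)^{p/2}$, expands by the binomial theorem, inserts the real-Gaussian moment formula $(n-1)!!(\sigma_s^2/2)^{n/2}$, and then must evaluate the combinatorial identity $\sum_{k=0}^{p/2}\frac{(p-2k-1)!!(2k-1)!!}{k!(p/2-k)!}=2^{p/2}$ (which it asserts without proof). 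You instead observe that $|s|^2$ is exponential with mean $\sigma_s^2$ and reduce $\mathbb{E}\{|s|^p\}$ to the Gamma integral $\sigma_s^{p}\Gamma(p/2+1)$. Your route is shorter, avoids the unproven double-factorial identity entirely, and makes the appearance of $(p/2)!$ transparent; the paper's route is more elementary in that it uses only real Gaussian moments and no knowledge of the chi-squared/exponential law. Your remark on the complex Isserlis theorem is also a valid third route and correctly identifies $\mathbb{E}\{s^2\}=0$ as the load-bearing fact.
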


\begin{proof}
$s=s_r+\jmath s_i$, where $s_r$ and $s_i$ are two independent, real-valued normal random variables, i.e., $s_r\stackrel{\text{d}}{=}s_i\sim\mathcal{N}\left(0,\frac{\sigma_s^2}{2}\right)$. For $q=p/2$, we have
\begin{equation}
\begin{aligned}
\mu_{pq}(s)&=\mathbb{E}\left\{|s|^p\right\} &&\text{(cf. Appendix)}\\
&=\mathbb{E}\left\{\left(s_r^2+s_i^2\right)^{p/2}\right\}\\
&=\sum_{k=0}^{p/2}{\dbinom{p/2}{k}\mathbb{E}\left\{s_r^{p-2k}s_i^{2k}\right\}}\\
&=\sum_{k=0}^{p/2}{\dbinom{p/2}{k}\mathbb{E}\left\{s_r^{p-2k}\right\}\mathbb{E}\left\{s_i^{2k}\right\}}\nonumber.
\end{aligned}
\end{equation}
Using the moments of a real-valued normal variable derived in \cite{papoulis1991}, we arrive at
\begin{equation}
\begin{aligned}
\mathbb{E}\left\{s_r^n\right\}=
\left\{
    \begin{array}{rl}
        (n-1)!!\left(\dfrac{\sigma_s^2}{2}\right)^{n/2}, &\text{if $n$ is even},\\
        0, &\text{if $n$ is odd}\nonumber.
    \end{array}
\right.
\end{aligned}
\end{equation}
This leads to
\begin{equation}
\begin{aligned}
\mu_{pq}(s)&=\sum_{k=0}^{p/2}{\dbinom{p/2}{k}\left(p-2k-1\right)!!\left(\frac{\sigma_s^2}{2}\right)^{p/2-k}}  \\
&\hspace{40mm}\times\left(2k-1\right)!!\left(\frac{\sigma_s^2}{2}\right)^k  \\
&=\left(\frac{p}{2}\right)!\left(\frac{\sigma_s^2}{2}\right)^{p/2}\underbrace{\sum_{k=0}^{p/2}{\frac{\left(p-2k-1\right)!!\left(2k-1\right)!!}{k!\left(p/2-k\right)!}}}_{\displaystyle =2^{p/2}}   \\
&=\left(\frac{p}{2}\right)!\sigma_s^p\nonumber.
\end{aligned}
\end{equation}

\end{proof}


Hence, Theorem \ref{th1} results in $\mu_{42}(w)=2\sigma_w^4$, $\mu_{63}(w)=6\sigma_w^6$, and $\mu_{84}(w)=24\sigma_w^8$, and the moments are properly derived in terms of the noise power. Considering our context, the denoised moments at each receive antenna $j$ are given below.
\begin{equation}\label{denmimo}
\begin{aligned}
\mu_{41}(x_j)&=\mu_{41}(\hat{x}_j)-3\mu_{20}(\hat{x}_j)\hat{\sigma}^2_{\widetilde{w}j}\\
\mu_{42}(x_j)&=\mu_{42}(\hat{x}_j)-4\mu_{21}(x_j)\hat{\sigma}^2_{\widetilde{w}j}-2\hat{\sigma}^4_{\widetilde{w}j}\\
\mu_{61}(x_j)&=\mu_{61}(\hat{x}_j)-5\mu_{40}(\hat{x}_j)\hat{\sigma}^2_{\widetilde{w}j}\\
\mu_{62}(x_j)&=\mu_{62}(\hat{x}_j)-8\mu_{41}(x_j)\hat{\sigma}^2_{\widetilde{w}j}-12\mu_{20}(\hat{x}_j)\hat{\sigma}^4_{\widetilde{w}j}\\
\mu_{63}(x_j)&=\mu_{63}(\hat{x}_j)-9\mu_{42}(x_j)\hat{\sigma}^2_{\widetilde{w}j}-18\mu_{21}(x_j)\hat{\sigma}^4_{\widetilde{w}j}-6\hat{\sigma}^6_{\widetilde{w}j}\\
\mu_{84}(x_j)&=\mu_{84}(\hat{x}_j)-16\mu_{63}(x_j)\hat{\sigma}^2_{\widetilde{w}j}\\
&\hspace{3.5mm}-72\mu_{42}(x_j)\hat{\sigma}^4_{\widetilde{w}j}-96\mu_{21}(x_j)\hat{\sigma}^6_{\widetilde{w}j}-24\hat{\sigma}^8_{\widetilde{w}j},
\end{aligned}
\end{equation}
where $\mu_{21}(x_j)=\mu_{21}(\hat{x}_j)-\hat{\sigma}^2_{\widetilde{w}j}$ and $\hat{\sigma}^2_{\widetilde{w}j}$ is the estimated power of the filtered noise at the receive antenna $j$. The variance is estimated as
\begin{equation}
\begin{aligned}
\left[\hat{\sigma}^2_{\widetilde{w}1}, \cdots ,\hat{\sigma}^2_{\widetilde{w}N_t}\right]^T
&=\text{diag}^{-1}\left(\mathbb{E}\left\{\widetilde{\textbf{w}}\widetilde{\textbf{w}}^H\right\}\right) \\
&=\text{diag}^{-1}\left(\mathbb{E}\left\{\textbf{G}^T\textbf{w}\textbf{w}^H\textbf{G}^*\right\}\right)\\
&=\text{diag}^{-1}\left(\textbf{G}^T\mathbb{E}\left\{\textbf{w}\textbf{w}^H\right\}\textbf{G}^*\right)\\
&=\hat{\sigma}_w^2\text{diag}^{-1}\left(\textbf{G}^T\textbf{G}^*\right),
\end{aligned}
\end{equation}
where $\hat{\sigma}_w^2$ is the estimated noise power of the channel. Furthermore, it is worth noting that, to offset the scale factor that can be introduced by BSS non-ideality, all the employed HOS are self-normalized, i.e., divided by $\mu_{21}^{p/2}$.

\section{Numerical Results}

In this section, we evaluate the performance of our proposed scheme as characterized by the probability of correct identification, $P_{ci}$. Computer simulations are based on the modulation pool of Table \ref{tableHOS} and different MIMO antenna configurations.

\begin{figure}[!htb]
\centering
\includegraphics[scale=.65]{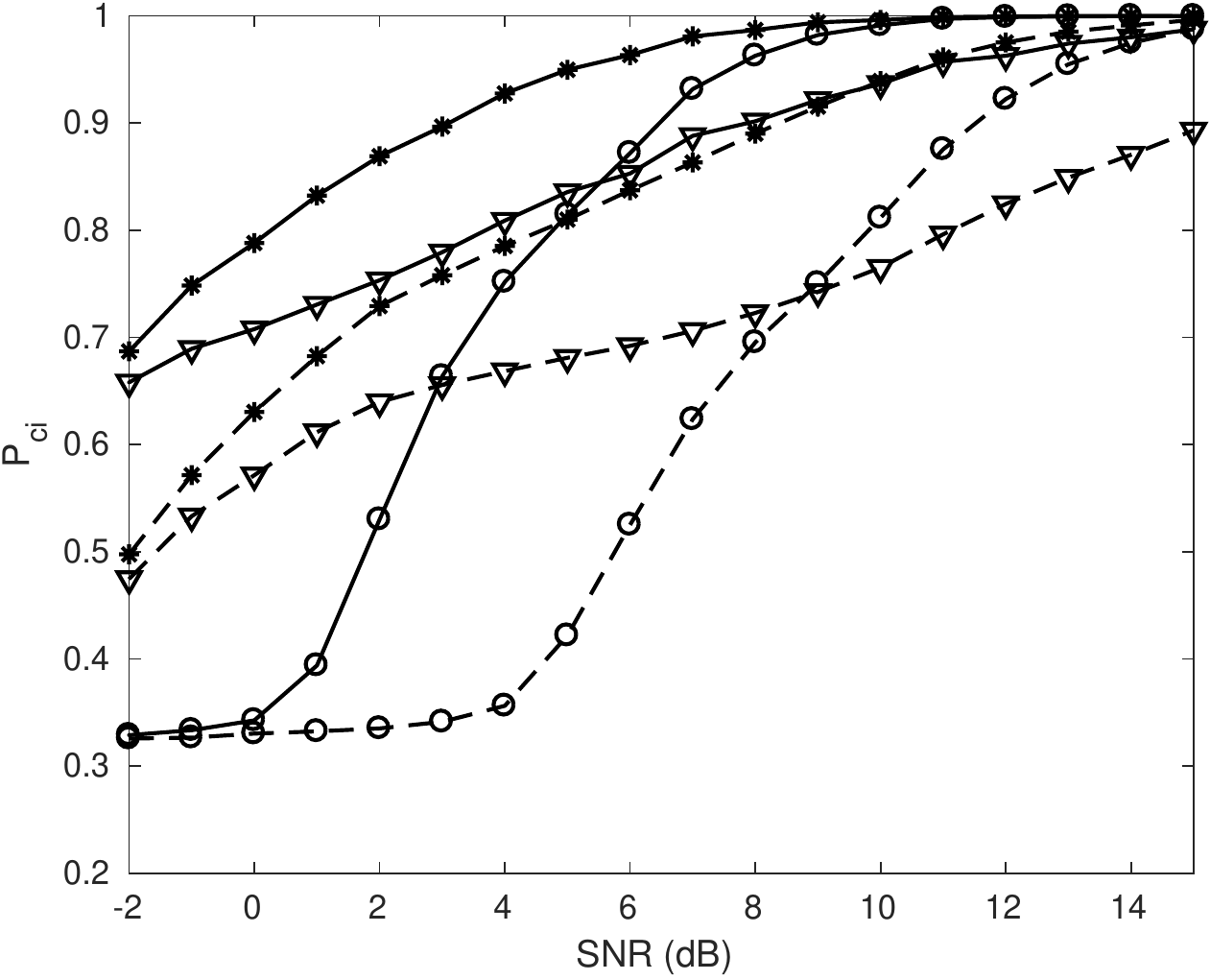}
\caption{Probability of correct identification in terms of SNR. The simulated scenarios are as follows. With moments denoising [$\ast$], without moments denoising [$\circ$], and cumulants only [$\triangledown$]. MIMO antenna configurations are $2\times6$ [\lc] and $3\times6$ [\lt].}
\label{fig2}
\end{figure}

\begin{figure}[!htb]
\centering
\includegraphics[scale=.65]{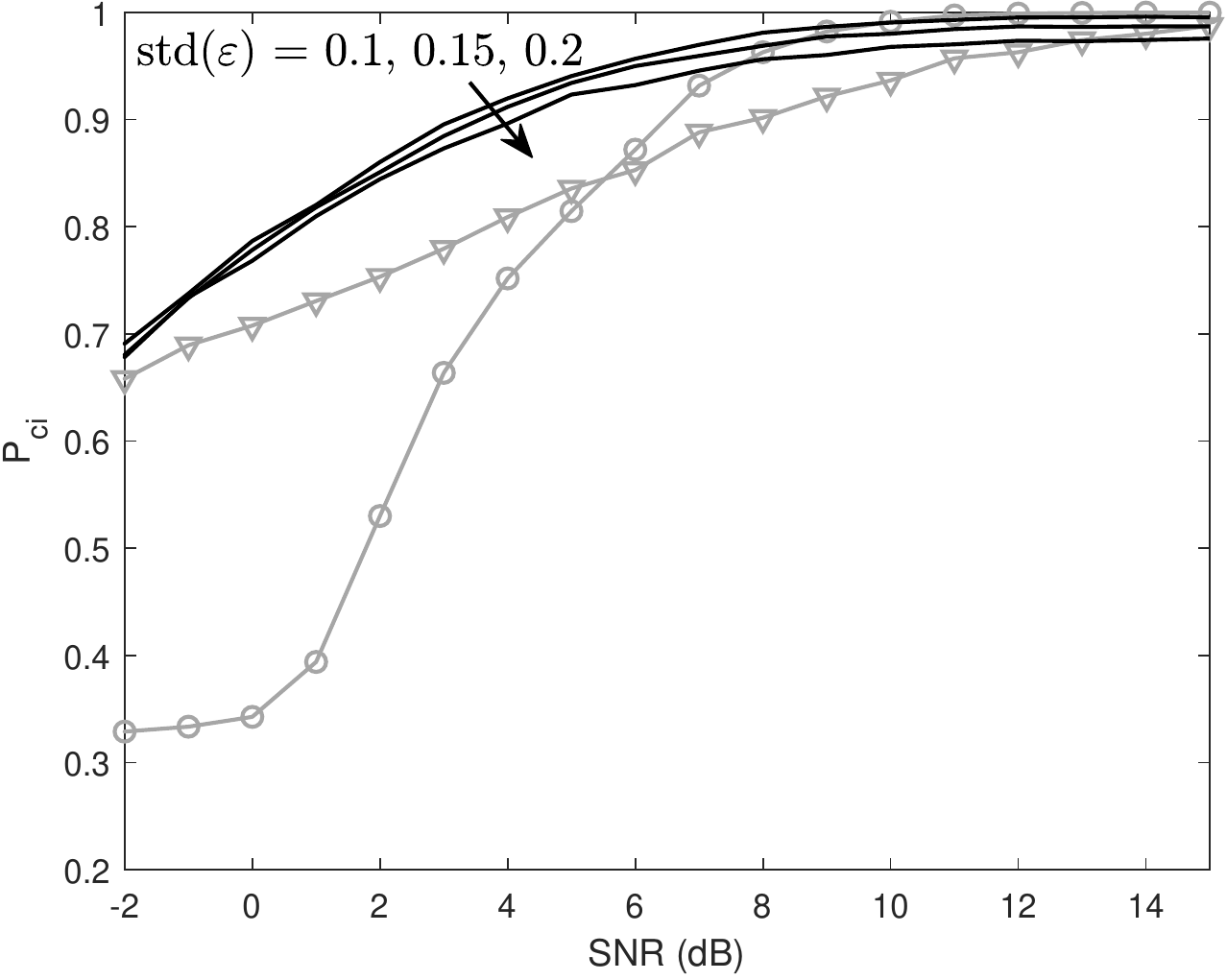}
\caption{Probability of correct identification in terms of SNR under different standard deviations (std) of $\varepsilon$. MIMO antenna configuration is $2\times6$. The plots in gray are copied from Fig.~\ref{fig2} for comparison with the HOC-only and non-denoised HOM senarios.}
\label{fig3}
\end{figure}

\begin{figure*}[!htb]
\centering
\subfloat[Phase noise effect.]{\includegraphics[scale=0.65]{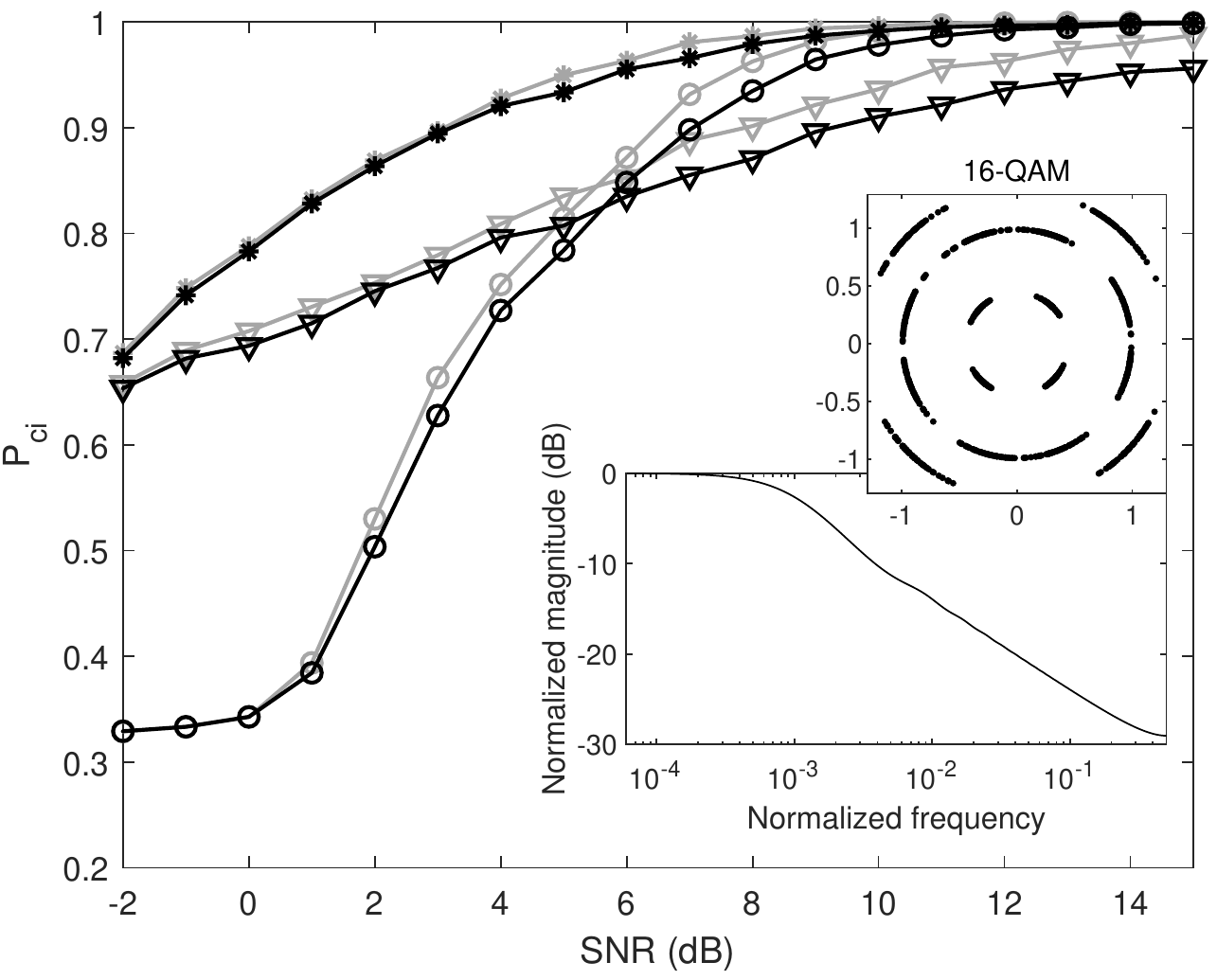} \quad}
\subfloat[CFO effect with normalized frequency offset of the order $10^{-4}$.]{\includegraphics[scale=0.65]{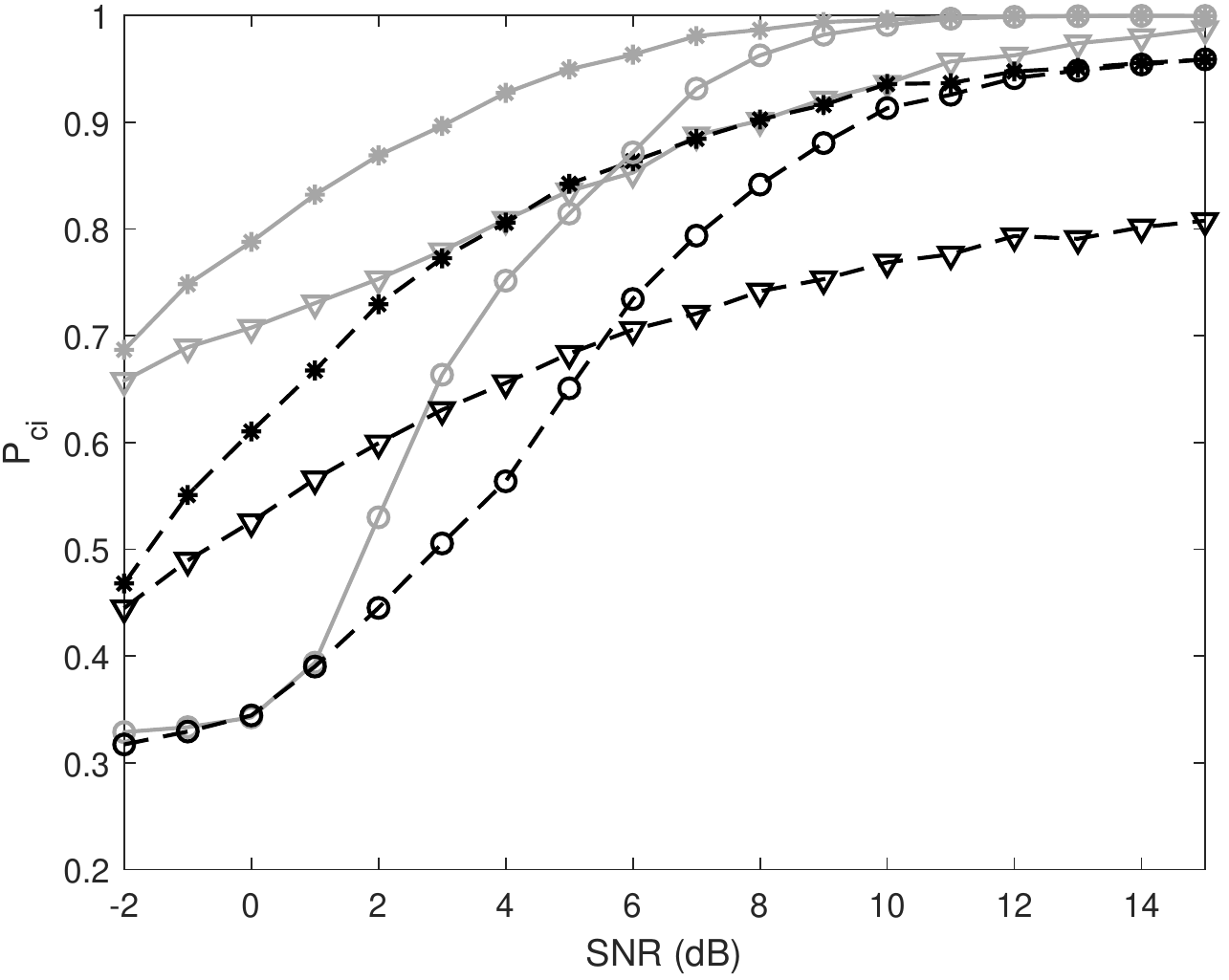}}
\caption{Probability of correct identification in terms of SNR in consideration of local oscillator imperfections. The simulated scenarios include: with moments denoising [$\ast$], without moments denoising [$\circ$], and cumulants only [$\triangledown$]. The MIMO antenna configuration is $2\times6$. The plots in gray are copied from Fig.~\ref{fig2} for comparison. The plots inside (a) show the frequency response of the filter used to generate the phase noise (characteristics of the power spectrum density mask are $2.10^{-3}$ and $-3$ dBc/Hz for the phase noise normalized-bandwidth and level, respectively), and the related effect observed on noise-free 16-QAM constellation as an example.}
\label{fig4}
\end{figure*}

Fig.~\ref{fig2} shows the performance of the identification system, with perfect estimation of $\sigma_w^2$ for the following scenarios: (i) a set of HOS is used without denoising HOM; (ii) only a set of HOC is used; and (iii) a set of HOS is used with denoising HOM. Distinctly, the third scenario, which represents our proposal, exhibits a better performance than the other comparative ones. For example, considering the MIMO antenna configuration $2\times6$ within the simulated SNR range, DMI in scenario (iii) attains an average gain of about 15\% and 7\% compared to a DMI in scenarios (i) and (ii), respectively. Regarding scenarios (i) and (iii), the performance gain is further higher at lower SNRs. It is noted in the same figure that for all the investigated scenarios, the identification performance decreases when $\Delta=N_r-N_t$ decreases. In fact, this phenomenon is due to the proportionality between the effectiveness of the BSS and $\Delta$ \cite{kharbech2014}. In respect of the performance drop due to a lower $\Delta$, DMI endures an average performance loss of 10\% for scenario (iii) compared to about 17\% and 15\% for scenarios (i) and (ii), respectively. This means that the performance gain in scenario (iii) is more significant for the MIMO antenna configuration $3\times6$. Therefore, the proposed method is more resistant against the BSS impairments when $\Delta$ becomes smaller.

Robustness to imperfect estimation of the channel noise power (i.e., $\hat{\sigma}_w^2=\sigma_w^2+\varepsilon$, where $\varepsilon$ is the estimation error) is investigated in Fig.~\ref{fig3}. Obviously, the performance undergoes degradation as the variance of the channel's noise power estimator becomes higher. Nevertheless, the denoising remains relevant, in particular at lower SNRs, and still outperforms the HOC-only approach.

Fig.~\ref{fig4} assess the reliability of the denoising-based process against impairments of the baseband receiver (local oscillator). To generate the phase noise, we employ one of the most commonly used procedures \cite{kasdin1995}.
Globally, due to the robustness of the utilized BSS algorithm over the phase noise, the three scenarios have not endured a significant performance loss. However, the denoising-based approach has almost maintained the same performance. This is unlike the effect of the carrier frequency offset (CFO), where a drop in performance is clearly observed for all approaches. Indeed, this is due to deficient BSS since the CFO induces a time-variation effect, and the SCMA algorithm is designed for time-invariant channels \cite{kharbech2016}. Yet, the DMI algorithm with the denoising approach remains the best compared to the other ones.

\section{Conclusion}

In summary, after deriving the noise-free moments for DMI, simulation results proved that modulation detection through denoising is more efficient that the deployment of classical processes for detection like the HOC-based ones and other schemes that do not proceed to denoise moments. Moreover, as it did not assume particular values of the SNR, the proposed approach of denoising is also applicable to hierarchical classification. With regards to complexity, additional operations related to denoising terms have a constant time complexity. Besides, since the evaluated identification scheme is based on the ones like \cite{kharbech2014,hassan2012} which follow the same complexity order (both have a polynomial running time), the outstanding performance of our proposal is achieved more blindly and with the same order of computational complexity of the HOS-based DMI systems in MIMO channels. More generally, employing denoised HOM for any HOS-based DMI algorithm, which assumes \emph{a priori} knowledge of the noise power or considers its estimation (e.g., \cite{kharbech2014,hassan2012,muhlhaus2013}), improves the identification accuracy without an increase in the computational complexity or at least its order.

\appendix[Deriving $\mu_{p0}(x+w)$ and $\mu_{pq}(s)$]

\paragraph{Deriving $\mu_{p0}(x+w)$}

Let the mix $x+w$ of a baseband digitally-modulated signal $x$ and a noise $w\sim\mathcal{CN}\left(0,\sigma_w^2\right)$ independent of $x$,
\begin{equation}
\begin{aligned}
\mu_{p0}(x+w)&=\mathbb{E}\left\{\left(x+w\right)^p\right\}=\sum_{k=0}^p{\dbinom{p}{k}\mathbb{E}\left\{x^{p-k}w^k\right\}}\\
&=\sum_{k=0}^p{\dbinom{p}{k}\mathbb{E}\left\{x^{p-k}\right\}\mathbb{E}\left\{w^k\right\}}.
\end{aligned}
\end{equation}
However, $w$ is a circularly symmetrical Gaussian process, i.e., $\mathbb{E}\left\{w^k\right\}=0 ~ \forall k\in\mathbb{N}^+$, thus,
\begin{equation}
\begin{aligned}
\mu_{p0}(x+w)&=\mathbb{E}\left\{x^p\right\}=\mu_{p0}(x).
\end{aligned}
\end{equation}

\paragraph{Deriving $\mu_{pq}(s)$}

Let $s=re^{\jmath\theta}\sim\mathcal{CN}\left(0,\sigma_s^2\right)$, where $r=|s|$ and $\theta=\angle{s}$ are independent,
\begin{equation}
\begin{aligned}
\mu_{pq}(s)&=\mathbb{E}\left\{s^{p-q}s^{*q}\right\}\\
&=\mathbb{E}\left\{r^{p-q}e^{\jmath(p-q)\theta}r^qe^{-\jmath q\theta}\right\}\\
&=\mathbb{E}\left\{r^p\right\}\mathbb{E}\left\{e^{\jmath(p-2q)\theta}\right\},
\end{aligned}
\end{equation}
however, $\theta\sim\mathcal{U}\left(-\pi,\pi\right)$, consequently,
\begin{equation}
\begin{aligned}
\mathbb{E}\left\{e^{\jmath(p-2q)\theta}\right\}=
\left\{
    \begin{array}{rl}
        1, &\text{if } q=\cfrac{p}{2},\\
        0, &\text{elsewhere}.
    \end{array}
\right.
\end{aligned}
\end{equation}
Thus,
\begin{equation}
\begin{aligned}
\mu_{pq}(s)=
\left\{
    \begin{array}{rl}
        \mathbb{E}\left\{|s|^p\right\}, &\text{if } q=\cfrac{p}{2},\\
        0, &\text{elsewhere}.
    \end{array}
\right.
\end{aligned}
\end{equation}

\section*{Acknowledgment}

The authors would like to thank Dr. Octavia A. Dobre from Memorial University of Newfoundland (NL, Canada) and Dr. Chad M. Spooner from NorthWest Research Associates (CA, USA) for the useful discussions on higher-order statistics.

\ifCLASSOPTIONcaptionsoff
  \newpage
\fi



%

\bibliographystyle{IEEEtran}
\bibliography{preprint_wcl_2020}

\begin{thebibliography}{10}
\providecommand{\url}[1]{#1}
\csname url@samestyle\endcsname
\providecommand{\newblock}{\relax}
\providecommand{\bibinfo}[2]{#2}
\providecommand{\BIBentrySTDinterwordspacing}{\spaceskip=0pt\relax}
\providecommand{\BIBentryALTinterwordstretchfactor}{4}
\providecommand{\BIBentryALTinterwordspacing}{\spaceskip=\fontdimen2\font plus
\BIBentryALTinterwordstretchfactor\fontdimen3\font minus
  \fontdimen4\font\relax}
\providecommand{\BIBforeignlanguage}[2]{{%
\expandafter\ifx\csname l@#1\endcsname\relax
\typeout{** WARNING: IEEEtran.bst: No hyphenation pattern has been}%
\typeout{** loaded for the language `#1'. Using the pattern for}%
\typeout{** the default language instead.}%
\else
\language=\csname l@#1\endcsname
\fi
#2}}
\providecommand{\BIBdecl}{\relax}
\BIBdecl

\bibitem{swami2000}
A.~{Swami} and B.~M. {Sadler}, ``Hierarchical digital modulation classification
  using cumulants,'' \emph{IEEE Transactions on Communications}, vol.~48,
  no.~3, pp. 416--429, Mar. 2000.

\bibitem{kharbech2014}
S.~{Kharbech}, I.~{Dayoub}, M.~{Zwingelstein-Colin}, E.~P. {Simon}, and
  K.~{Hassan}, ``Blind digital modulation identification for time-selective
  mimo channels,'' \emph{IEEE Wireless Communications Letters}, vol.~3, no.~4,
  pp. 373--376, Aug. 2014.

\bibitem{liu2017}
X.~{Liu}, C.~{Zhao}, P.~{Wang}, Y.~{Zhang}, and T.~{Yang}, ``Blind modulation
  classification algorithm based on machine learning for spatially correlated
  mimo system,'' \emph{IET Communications}, vol.~11, no.~7, pp. 1000--1007, May
  2017.

\bibitem{hassan2012}
K.~{Hassan}, I.~{Dayoub}, W.~{Hamouda}, C.~N. {Nz\'eza}, and M.~{Berbineau},
  ``Blind digital modulation identification for spatially-correlated mimo
  systems,'' \emph{IEEE Transactions on Wireless Communications}, vol.~11,
  no.~2, pp. 683--693, Feb. 2012.

\bibitem{bahloul2017}
M.~R. Bahloul, M.~Z. Yusoff, A.-H. Abdel-Aty, M.~N. Saad, and A.~Laouiti,
  ``Efficient and reliable modulation classification for mimo systems,''
  \emph{Arabian Journal for Science and Engineering}, vol.~42, no.~12, pp.
  5201--5209, Dec. 2017.

\bibitem{kharbech2013}
S.~Kharbech, I.~Dayoub, E.~Simon, and M.~Zwingelstein-Colin, ``Blind digital
  modulation detector for {MIMO} systems over high-speed railway channels,'' in
  \emph{International Workshop on Communication Technologies for
  Vehicles}.\hskip 1em plus 0.5em minus 0.4em\relax Springer, May 2013, pp.
  232--241.

\bibitem{ali2017}
A.~{Ali} and F.~{Yangyu}, ``Automatic modulation classification using deep
  learning based on sparse autoencoders with nonnegativity constraints,''
  \emph{IEEE Signal Processing Letters}, vol.~24, no.~11, pp. 1626--1630, Nov.
  2017.

\bibitem{tayakout2018}
H.~{Tayakout}, I.~{Dayoub}, K.~{Ghanem}, and H.~{Bousbia-Salah}, ``Automatic
  modulation classification for d-stbc cooperative relaying networks,''
  \emph{IEEE Wireless Communications Letters}, vol.~7, no.~5, pp. 780--783,
  Oct. 2018.

\bibitem{mendel1991}
J.~M. {Mendel}, ``Tutorial on higher-order statistics (spectra) in signal
  processing and system theory: theoretical results and some applications,''
  \emph{Proceedings of the IEEE}, vol.~79, no.~3, pp. 278--305, Mar. 1991.

\bibitem{marey2014}
M.~{Marey} and O.~A. {Dobre}, ``Blind modulation classification algorithm for
  single and multiple-antenna systems over frequency-selective channels,''
  \emph{IEEE Signal Processing Letters}, vol.~21, no.~9, pp. 1098--1102, Sep.
  2014.

\bibitem{marey2015}
------, ``Blind modulation classification for alamouti stbc system with
  transmission impairments,'' \emph{IEEE Wireless Communications Letters},
  vol.~4, no.~5, pp. 521--524, Oct. 2015.

\bibitem{muhlhaus2013}
M.~S. {M\"uhlhaus}, M.~{\"Oner}, O.~A. {Dobre}, and F.~K. {Jondral}, ``A low
  complexity modulation classification algorithm for mimo systems,'' \emph{IEEE
  Communications Letters}, vol.~17, no.~10, pp. 1881--1884, Oct. 2013.

\bibitem{eldemerdash2016}
Y.~A. {Eldemerdash}, O.~A. {Dobre}, and M.~{\"Oner}, ``Signal identification
  for multiple-antenna wireless systems: achievements and challenges,''
  \emph{IEEE Communications Surveys Tutorials}, vol.~18, no.~3, pp. 1524--1551,
  thirdquarter 2016.

\bibitem{ikhlef2007}
A.~Ikhlef and D.~Le~Guennec, ``A simplified constant modulus algorithm for
  blind recovery of mimo qam and psk signals: A criterion with convergence
  analysis,'' \emph{EURASIP Journal on Wireless Communications and Networking},
  vol. 2007, no.~1, p. 090401, Dec. 2007.

\bibitem{chen1991}
W.~{Chen}, K.~M. {Wong}, and J.~P. {Reilly}, ``Detection of the number of
  signals: a predicted eigen-threshold approach,'' \emph{IEEE Transactions on
  Signal Processing}, vol.~39, no.~5, pp. 1088--1098, May 1991.

\bibitem{dobre2007}
O.~Dobre, A.~Abdi, Y.~Bar-Ness, and W.~Su, ``Survey of automatic modulation
  classification techniques: classical approaches and new trends,'' \emph{IET
  Communications}, vol.~1, no.~2, pp. 137--156, Apr. 2007.

\bibitem{papoulis1991}
A.~Papoulis, \emph{Probability, Random Variables, and Stochastic Processes},
  3rd~ed.\hskip 1em plus 0.5em minus 0.4em\relax McGraw-Hill, Inc., 1991, ch.
  Functions of One Random Variable, pp. 109--111.

\bibitem{kasdin1995}
N.~J. {Kasdin}, ``Discrete simulation of colored noise and stochastic processes
  and 1/f/sup /spl alpha// power law noise generation,'' \emph{Proceedings of
  the IEEE}, vol.~83, no.~5, pp. 802--827, May 1995.

\bibitem{kharbech2016}
S.~{Kharbech}, I.~{Dayoub}, M.~{Zwingelstein-Colin}, and E.~P. {Simon}, ``On
  classifiers for blind feature-based automatic modulation classification over
  multiple-input-multiple-output channels,'' \emph{IET Communications},
  vol.~10, no.~7, pp. 790--795, Apr. 2016.

\end{thebibliography}

\end{document}